\newtheorem{theorem}{Theorem}[section]
\newtheorem{remark}{Remark}[section]
\newtheorem{definition}{Definition}[section]
\numberwithin{equation}{section}
\begin{document}
\title{The Kepler Problem: Orbit Cones and Cylinders}
\author{Terry R. McConnell\\
Syracuse University}
\maketitle
\begin{abstract}Planetary orbits, being conic sections, may be obtained as the locus of intersection of planes and cones. The planes involved are familiar to anyone who has studied the classical Kepler problem. We focus here on the cones.
\end{abstract}

\section{Introduction}

Derivations of the orbit equation in the Kepler problem traditionally begin with a reduction to motion in a plane via conservation of angular momentum. This simple, appealing, and effective step is immediate from the rotational symmetry of the force law. What if it were possible similarly to ``read off'' from the form of the equations of motion that solutions must be confined to the surface of certain cones? At the least, this would provide a pedagogically appealing demonstration of the conic section nature of orbits, and, moreover, one that is faithful to the provenance of these curves in solid geometry.

In this paper we study the cones and cylinders of Keplerian orbits. For example, we show in section 4 that orbits of energy $h < 0$ lie on the surface of either of two cylinders with axes that pass through the center of the orbit in the direction of $\bold l \pm \sqrt{\frac{-1}{2mh}} \bold A$, where $\bold l$ is the angular momentum vector, $m$ is the mass, and  $\bold A$ is the Laplace-Runge-Lenz vector.

A conic section uniquely determines the plane in which it lies, but there are infinitely many cones which intersect with a plane to produce a given conic section. It turns out that the locus of vertices of all such cones forms another conic section in a plane perpendicular to the original plane, and with an eccentricity equal to the reciprocal of the eccentricity of the original conic section. We provide complete and self-contained proofs of these rather striking facts in section 2 below. In section 3 we recall the Kepler problem and give rigorous proofs of some qualitative results about orbits. Section 4 is devoted to characterizing the cones and cylinders of planetary orbits in terms of initial conditions, and in section 5 we calculate the location of the cylinder axes for the earth's orbit.

Probably nothing in this paper is new. (How could there be anything new to discover about conic sections or the Kepler problem?) Modern textbooks, however, invariably view conic sections as planar curves, perhaps to the detriment of some very pretty mathematics that only comes into focus when these curves are viewed in a fully 3 dimensional context. We certainly do not claim our approach is pedagogically superior to any of the more standard ones, but hope that it may have some interest.

Throughout the paper we use bold face for vector quantities. Scalar quantities, and, in particular, the magnitude of a vector, are given in normal typeface. Thus, the magnitude of $\mathbf v$ is $v.$ Dots over variables indicate derivatives with respect to time.
\eject

\section{Ellipses in Space}

Let E be an ellipse, considered as a point subset of 3 dimensional Euclidean space. The ellipse uniquely determines the plane, $\Pi,$ that contains it. For study of the plane geometric properties of E it is convenient to adopt as definition of an ellipse the following:

\begin{definition}
An ellipse is the locus of points in the plane, the sum of whose distances from a given pair of distinct points $F_1$ and
$F_2$ (its foci) is equal to a given positive constant.
\end{definition}

(It is customary to include the circle as a degenerate case of the ellipse.)

To state the definition of ellipse as a conic section, it is necessary to clarify some terminology and notation. First, by the word {\it cone} we shall always mean ``right circular cone.'' More precisely, given a line A and a point on it, V, the cone C with {\it vertex} V and {\it axis} A is the locus of points P such that the smaller positive angle from segment $\overline{\text{PV}}$ to line A has a fixed positive value. The lines $\overline{\text{PV}}$ are called {\it generators} of C. The maximum plane angle, $\theta \in (0,\pi),$ between two generators is a characteristic of C that does not appear to have a standard name. We shall take the liberty of coining the term {\it bloom} for the angle $\theta.$

A cone C may naturally be divided into 3 disjoint subsets:
\begin{equation*}
C = C_{-} \cup \{V\} \cup C_{+},
\end{equation*}
where $C_{\pm}$ are congruent connected sets.

If A has an orientation, designate the {\it nappes} $C_{+}$ and $C_{-}$ in the obviously consistent way; otherwise, choose them indifferently.

The definition of ellipse as conic section follows:
\begin{definition}
An ellipse is the locus of intersection of a given cone C and a given plane that is not parallel to any generator of C, provided the plane meets only one nappe of C.
\end{definition}

We shall call the cones C having $E = C \cap \Pi$ the cones {\it belonging to E}. Our goal in this section is to characterize the set of cones belonging to a given ellipse.

For an ellipse, represented as the intersection of a cone C and a plane $\Pi$, a {\it Dandelin sphere} is a sphere centered on the axis A of C, which meets one nappe of C in a circle centered on A, and which meets $\Pi$ in exactly one point. A given ellipse with accompanying cone has exactly two Dandelin spheres lying on opposite sides of $\Pi$.

A standard use of Dandelin spheres (see, e.g., \cite[pp. 320-321]{eves}) is to derive definition 2.1 as a consequence of definition 2.2, yielding along the way the fact that the points of intersection of $\Pi$ with the Dandelin spheres are the foci of E. We use them in a similar way to prove

\begin{theorem} Let E be an ellipse lying in a plane $\Pi$ with foci $F_1$ and $F_2.$ The locus of vertices of cones belonging to E is a subset of the unique plane $\Pi^{\prime}$ through line $\overline{F_1F_2}$ and perpendicular to $\Pi.$
\end{theorem}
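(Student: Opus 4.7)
The plan is to leverage the Dandelin spheres associated with the cone-plane realization $E = C \cap \Pi$, together with the tangency fact that the author has already invoked: the two Dandelin spheres meet $\Pi$ in exactly the foci $F_1$ and $F_2$ of $E$. From this foothold the rest is essentially an elementary observation about perpendicularity.

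First I would set up notation. Fix any cone $C$ belonging to $E$, let $V$ be its vertex and $A$ its axis, and let $S_1, S_2$ be the two Dandelin spheres, with centers $O_1, O_2$ respectively. By the definition recalled just before the theorem, each $O_i$ lies on $A$, and each $S_i$ meets $\Pi$ in a single point, namely the focus $F_i$.

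Next, the key geometric remark: if a sphere meets a plane in a single point, that point is the foot of the perpendicular from the sphere's center to the plane. Hence the segment $\overline{O_i F_i}$ is perpendicular to $\Pi$ for $i = 1, 2$. Consequently $O_1$ and $O_2$ both lie on the (two distinct) lines through $F_1$ and $F_2$ normal to $\Pi$. Both of these normal lines are contained in the unique plane $\Pi'$ through $\overline{F_1F_2}$ perpendicular to $\Pi$, so $O_1, O_2 \in \Pi'$. Since $O_1 \ne O_2$ (they lie on opposite sides of $\Pi$, or at least are distinct because the spheres are distinct), they determine the axis $A$, and therefore $A \subset \Pi'$. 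The vertex $V$ lies on $A$, so $V \in \Pi'$, as required.

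I do not anticipate a serious obstacle. The two minor things to verify carefully are (i) the uniqueness of $\Pi'$, which uses only that $F_1 \ne F_2$, granted by Definition 2.1, and (ii) that $O_1 \ne O_2$ so that the axis $A$ is actually pinned down by the two Dandelin centers rather than merely passing through a single point of $\Pi'$; this follows because the Dandelin spheres lie on opposite sides of $\Pi$ and so their centers do as well. Everything else is a direct appeal to the tangent-radius perpendicularity of a sphere.
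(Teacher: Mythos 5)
Your proposal is correct and follows essentially the same argument as the paper: both use the fact that the Dandelin sphere centers lie on the cone's axis and that tangency to $\Pi$ forces the segments from the centers to the foci to be perpendicular to $\Pi$, so the axis (and hence the vertex) lies in the plane through $\overline{F_1F_2}$ normal to $\Pi$. The only cosmetic difference is that the paper constructs $\Pi^{\prime}$ from the two parallel perpendicular lines and then notes it has the stated properties, whereas you start from the stated $\Pi^{\prime}$ and verify the centers lie in it.
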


\begin{proof}
Let C be a cone belonging to E.
Let $S_1$ and $S_2$ be the two Dandelin spheres, which meet $\Pi$ in points $F_1$ and $F_2$ respectively. Let $C_1$ and $C_2$ be the respective centers of these spheres. Since each sphere is tangent to $\Pi,$ it follows that the lines $\overline{C_1F_1}$ and $\overline{C_2F_2}$ are each perpendicular to $\Pi.$ Therefore these lines are parallel to each other, and since they are distinct, they determine a unique plane $\Pi^{\prime}$. Since both $C_1$ and $C_2$ lie in $\Pi^{\prime}$, it follows that the axis of C is a subset of $\Pi^{\prime}.$ In particular, the vertex of C lies in $\Pi^{\prime}.$ The fact that $\Pi$ and $\Pi^{\prime}$ are perpendicular to each other is immediate from the construction of $\Pi^{\prime}.$
\end{proof}
\eject
\begin{theorem} Let E be an ellipse as in Theorem 2.1. Let segment $\overline{AP}$ be the major axis of E, and determine the plane $\Pi^{\prime}$ as in Theorem 2.1. Then the vertices of cones belonging to E lie on the hyperbola in $\Pi^{\prime}$ having foci A and P and vertices at $F_1$ and $F_2.$
\end{theorem}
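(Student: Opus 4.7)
The strategy is to use the Dandelin spheres, as in the proof of Theorem 2.1, but to extract from them a focal-difference relation for the vertex $V$ rather than a containment statement. Specifically, I will show that for any cone $C$ belonging to $E$, its vertex $V$ satisfies $\bigl| |VA|-|VP| \bigr| = |F_1F_2|$. Since the hyperbola with foci $A,P$ and vertices $F_1,F_2$ has semi-transverse axis $c := |F_1F_2|/2$ (the vertices are equidistant from the midpoint of $\overline{F_1F_2}$, which is the common center of both conic sections), this identity is precisely the two-focal-distance definition of the claimed hyperbola, so $V$ must lie on it.

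For the setup, Theorem 2.1 gives $V \in \Pi'$, and by construction $A,P$ lie on the line $\overline{F_1F_2} \subset \Pi'$, so both generators $\overline{VA}$ and $\overline{VP}$ lie in $\Pi'$. Let $S_1$ and $S_2$ be the Dandelin spheres of $(C,\Pi)$, tangent to $\Pi$ at $F_1$ and $F_2$ respectively, and meeting the cone along circles $K_1$ and $K_2$. After possibly relabeling $F_1 \leftrightarrow F_2$, I may assume that $S_1$ lies between $V$ and $\Pi$ and that $A$ is the endpoint of the major axis nearer $F_1$.

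The crux of the argument is the use of two tangent-length identities. First, the rotational symmetry of the right circular cone $C$ about its axis forces the tangent point on $K_1$ along any generator from $V$ to lie at a common distance $g_1$ from $V$. Second, for any point $X$ on $E$, the segment $\overline{XF_1}$ lies in the plane $\Pi$, which is tangent to $S_1$ at $F_1$, so $\overline{XF_1}$ is itself a tangent line to $S_1$; this yields $|XF_1| = |XT_X|$, where $T_X$ denotes the tangent point of $S_1$ with the generator through $X$. Applied to $X = A$ and $X = P$, these give
\begin{equation*}
|VA| = g_1 + |AF_1|, \qquad |VP| = g_1 + |PF_1|,
\end{equation*}
and subtracting yields
\begin{equation*}
|VP|-|VA| = |PF_1|-|AF_1| = (a+c)-(a-c) = 2c = |F_1F_2|,
\end{equation*}
where $a$ denotes the semi-major axis of $E$. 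If instead $V$ lies on the opposite side of $\Pi$ from $S_1$, the roles of $S_1$ and $S_2$ swap and the same argument yields $|VA|-|VP| = |F_1F_2|$. Either way, $V$ satisfies the required focal-difference identity.

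The main obstacle is not any single hard step but the bookkeeping of the geometric picture: one must verify the order $V, T_A, A, T_A'$ of points along the generator through $A$ (so that the pluses in the displayed equations are correct), and handle both orientations of $V$ relative to $\Pi$ uniformly. Once that picture is pinned down, Dandelin's classical tangent-length trick does essentially all of the work.
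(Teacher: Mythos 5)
Your proof is correct and takes essentially the same approach as the paper: both use the equal-tangent-length identities for the Dandelin sphere on the same side of $\Pi$ as $V$ (the paper phrases this as the incircle of triangle $APV$ in the cross-sectional plane $\Pi'$) to obtain $\bigl|\,|\overline{VA}|-|\overline{VP}|\,\bigr| = \bigl|\,|\overline{AF_1}|-|\overline{PF_1}|\,\bigr| = 2ea = |\overline{F_1F_2}|$, and then invoke the focal-difference characterization of the hyperbola with foci $A,P$ and vertices $F_1,F_2$.
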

\begin{floatingfigure}[l]{2.5in}
\scalebox{.3}{\includegraphics{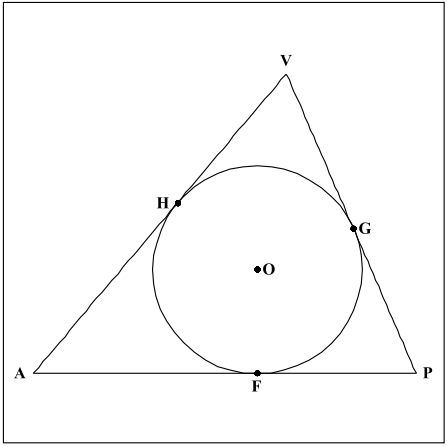}}
\caption{Intersection with plane $\Pi^{\prime}$}
\end{floatingfigure}
{\it Proof.} Let C be a cone belonging to E with vertex V.
Refer to figure 1, in which the plane of the figure corresponds to $\Pi^{\prime}.$ The circle there represents the intersection of $\Pi^{\prime}$ with one of the Dandelin spheres of E.  Point F represents one of the foci of E, and points G and H represent points of contact between the Dandelin sphere and C. Since circle FGH is the incircle of triangle APV, its center O is at the intersection of the bisectors of the internal angles of APV. Denoting by $|\overline{AH}|$ the length of segment $\overline{AH}$, we have
$|\overline{AH}| = |\overline{AF}|, |\overline{PG}| = |\overline{PF}|,$ and  $|\overline{VH}| = |\overline{VG}|.$ Therefore,
\begin{eqnarray*}
|\overline{VA}| - |\overline{VP}| & = &\left(|\overline{VH}| + |\overline{AH}|\right) - \left(|\overline{VG}| + |\overline{PG}|\right)\\ & = & |\overline{AH}| - |\overline{PG}| =  |\overline{AF}| - |\overline{PF}|.
\end{eqnarray*}
The absolute value of the last written expression reduces to $2ea,$ where $e$ is the eccentricity of E, and $a$ is the semi-major axis of E. Since the difference of distances from all such V to the points A and P is constant, it follows that the locus of points V lies on the stated hyperbola. \hfill $\Box$

Denote by $E^{\prime}$ the hyperbola of Theorem 2.2. It has foci at A and P, one vertex at F, and eccentricity $\frac1e.$ The other vertex of $E^{\prime}$ is at the other focus of E. (The excircle of triangle APV, having center at the intersection of $\overline{VO}$ and the bisectors of the exterior angles of APV at A and P, touches $\overline{AP}$ at the other vertex of $E^{\prime}$, and is also the intersection with $\Pi^{\prime}$ of the other Dandelin sphere of E.)
\begin{floatingfigure}[r]{2.5in}
\scalebox{.3}{\includegraphics{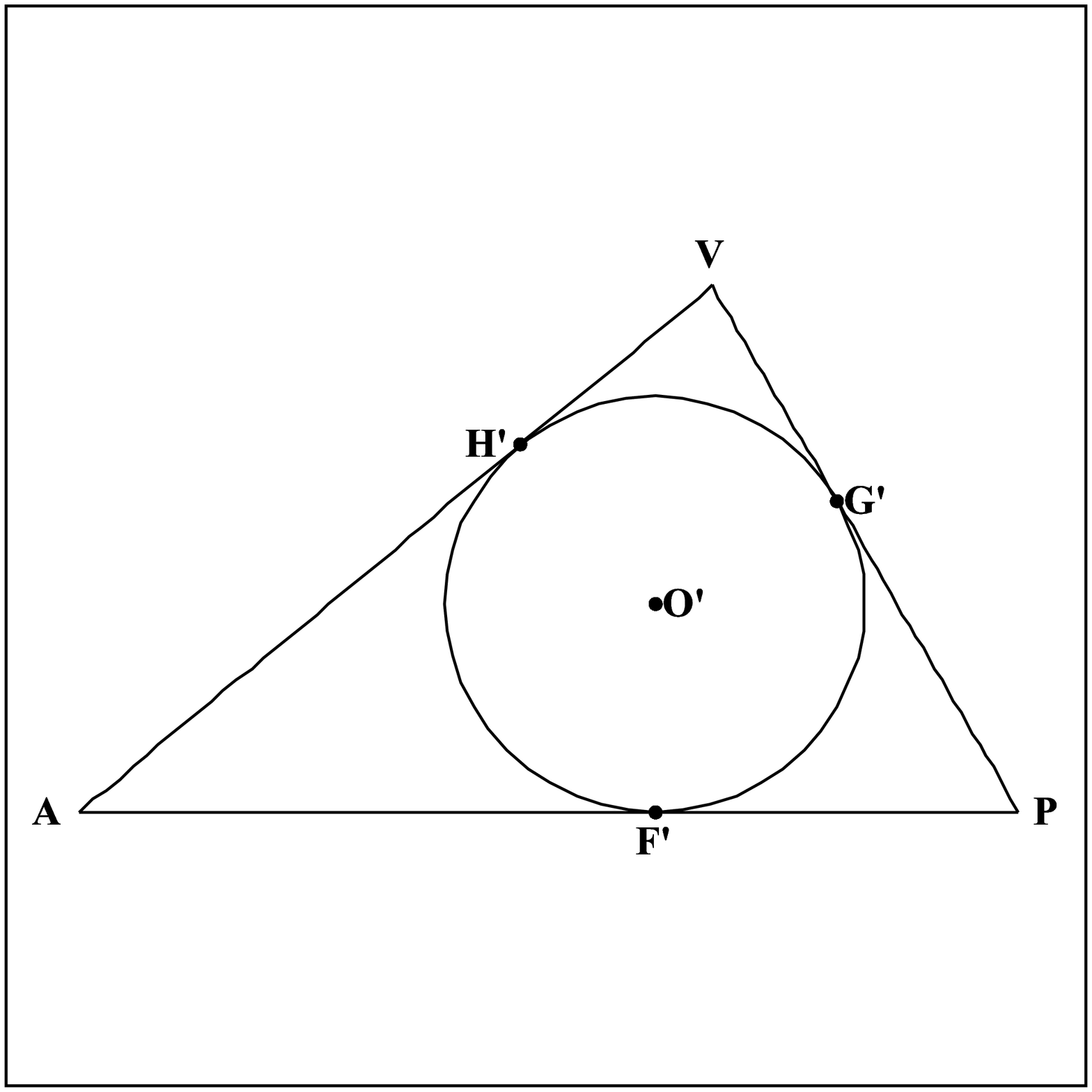}}
\caption{V lying on $E^{\prime}$}
\end{floatingfigure}
Next we show that every point of the hyperbola $E^{\prime},$ except for its vertices, is the vertex of some cone belonging to E. Consider such a point V lying on the branch of $E^{\prime}$ through F. Assume $|\overline{AF}| > |\overline{FP}|$, the contrary case being similar. Construct triangle APV in $\Pi^{\prime}$, and its incircle $H^{\prime}F^{\prime}G^{\prime}$ centered at $O^{\prime}$, as in figure 2.

 By the same reasoning as in the proof of Theorem 2.2, we have

\begin{equation}
 |\overline{AF^{\prime}}| - |\overline{PF^{\prime}}| = |\overline{VA}| - |\overline{VP}| = 2ea.
\end{equation}

In particular, $|\overline{AF^{\prime}}| > |\overline{PF^{\prime}}|.$

Let C be the cone with vertex V, axis $\overline{VO^{\prime}}$, and generators $\overline{VA}$ and $\overline{VP}.$ The intersection of C with $\Pi$ is an ellipse (by definition 2.2), and this ellipse has the sphere with center $O^{\prime}$ and radius
$|\overline{O^{\prime}F^{\prime}}|$ as one of its Dandelin spheres. It also has $\overline{AP}$ as major axis. Let $e^{\prime}$ be the eccentricity of this ellipse. Then it follows from (2.1) that $2ae^{\prime} = 2ae.$
Thus E = $C \cap \Pi,$ and V is the vertex of a cone belonging to E.

The following theorem summarizes much of the foregoing:
\begin{theorem}
Let E be an ellipse with eccentricity $0 < e < 1$ lying in plane $\Pi$ and having major axis $\overline{AP}.$ Let
$\Pi^{\prime}$ be the plane through $\overline{AP}$ perpendicular to $\Pi.$ Then for each $0 < \theta < \pi$ there are exactly 4 cones of bloom $\theta$ belonging to E. Each cone is obtainable from the others by reflection in plane $\Pi$ and in the plane through the center of the ellipse perpendicular to both $\Pi$ and $\Pi^{\prime}.$ The vertices of all such cones lie on the hyperbola in $\Pi^{\prime}$ having foci at A and P and eccentricity $\frac1e$.
\end{theorem}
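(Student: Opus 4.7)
The plan is to reduce the ``bloom $= \theta$'' condition at a candidate vertex $V \in E'$ to a triangle-geometry constraint at $V$, and then to count the resulting solutions.

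First, I would verify that for any cone C belonging to E with vertex V the bloom equals the plane angle $\angle AVP$ measured in $\Pi'$. By Theorem 2.1 the axis of C lies in $\Pi'$, and since A and P lie on $E \subset C$, the segments $\overline{VA}$ and $\overline{VP}$ are generators of C that lie in $\Pi'$. The incenter construction used after Theorem 2.2 puts these two generators on opposite sides of the axis within the axial plane $\Pi'$, so they are diametrically opposite generators of C; the angle between them therefore realizes the maximum generator-generator angle, which by definition is the bloom $\theta$.

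Second, set $u = |\overline{VA}|$, $v = |\overline{VP}|$, and note that $|\overline{AP}| = 2a$. Equation (2.1) gives $|u-v| = 2ea$, while the law of cosines in triangle $AVP$ gives $u^2 + v^2 - 2uv\cos\theta = 4a^2$. A brief manipulation, using $1-\cos\theta = 2\sin^2(\theta/2)$, determines both $uv$ and $(u+v)^2$ in closed form in terms of $a, e, \theta$, and so fixes the unordered pair $\{u,v\}$ uniquely. Because $e > 0$ one finds $u \ne v$, so there are two orderings $(u,v)$; each ordering locates V as an intersection of two circles (radius $u$ about A, radius $v$ about P) in the plane $\Pi'$, and each such intersection consists of two points related by reflection in line $\overline{AP}$. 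The total is four candidate vertices.

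Third, I would match these four vertices to the symmetries in the statement and confirm that each really arises from a cone belonging to E. Swapping $(u,v)$ interchanges A and P and so reflects V in the plane $\Sigma$ perpendicular to $\overline{AP}$ through the center of E; since $\Sigma$ contains both a normal to $\Pi$ and the minor axis of E (which is orthogonal to $\Pi'$), $\Sigma$ is the plane perpendicular to both $\Pi$ and $\Pi'$ named in the statement. The two points arising from a fixed ordering are related by reflection across $\overline{AP} = \Pi \cap \Pi'$ within $\Pi'$, which is precisely reflection in $\Pi$. One checks $(u+v)^2 > 4a^2$ for $\theta \in (0,\pi)$, so V lies strictly off line $\overline{AP}$ and in particular is not a hyperbola-vertex $F_1$ or $F_2$; by the discussion after Theorem 2.2 each such V is the vertex of a cone belonging to E, and by Step 1 that cone has bloom $\theta$. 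The hyperbola locus claim was already supplied by Theorem 2.2 and the remark that follows it.

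The main obstacle, I expect, is Step 1: one must argue carefully that the particular pair $\overline{VA}, \overline{VP}$ of generators is the pair that maximizes the generator-generator angle, rather than some other pair of generators of C. Once bloom $= \angle AVP$ is in hand, the remaining work is elementary algebra and plane-circle geometry.
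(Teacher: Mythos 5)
Your proposal is correct, and it actually supplies more than the paper does: the paper states Theorem 2.3 with no proof at all, offering it as a ``summary'' of Theorems 2.1, 2.2 and the construction following Theorem 2.2, and in particular never justifies the count of exactly four cones of a given bloom. Your route to that count --- identifying the bloom with the axial-plane angle $\angle AVP$, then pairing equation (2.1), $|u-v|=2ea$, with the law of cosines $u^2+v^2-2uv\cos\theta=4a^2$ to pin down the unordered pair $\{u,v\}$ and hence four vertex positions with the stated reflection symmetries --- is a genuine addition, and the worry you flag about Step 1 dissolves on inspection: the axis of $C$ lies in $\Pi'$ by Theorem 2.1, so $\overline{VA}$ and $\overline{VP}$ are the two generators cut out by an axial plane, i.e.\ diametrically opposite, and a short computation with unit vectors along generators shows the generator--generator angle is maximized exactly at azimuthal separation $\pi$, where it equals twice the half-angle. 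Two small points are worth adding. First, ``exactly 4'' also requires that a given vertex $V$ carries at most one cone belonging to $E$; this holds because no generator of such a cone is parallel to $\Pi$, so every generator meets $\Pi$ in a point of $E$ and the cone is forced to be the union of the lines $\overline{VQ}$, $Q\in E$. Second, you should note explicitly that $|u-v|=2ea<2a<u+v$, so the two circles genuinely meet in two points off the line $\overline{AP}$ (you verify the second inequality but not the first, which is where $e<1$ enters). With those sentences added, your argument is a complete and self-contained proof of the one part of the theorem the paper leaves to the reader.
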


The axis of a cone belonging to a non-circular ellipse intersects the major axis in a point that lies strictly between the center of the ellipse and the focus that is nearer to the cone vertex. Referring again to figure 1, in which F is the focus closest to V, we have $|\overline{AF}| > |\overline{FP}|,$ hence $|\overline{VA}| > |\overline{VP}|.$ Denoting by J the point of intersection of $\overline{VO}$ and $\overline{AP}$, we have $|\overline{AF}| > |\overline{AJ}| > |\overline{PJ}|.$ To see the left-hand inequality, note that, since the interior angle of APV at P exceeds the angle at A, the interior angle of triangle VJP at J is less than a right angle. On the other hand, $\overline{OF}$ is perpendicular to $\overline{AP}.$ To see the right-hand inequality, apply the law of sines in triangles AJV and VJP to show that the ratio of $|\overline{AJ}|$ to $|\overline{PJ}|$ is equal to the ratio of $|\overline{VA}|$ to $|\overline{VP}|.$

The intersection of a cylinder and a plane not parallel to the cylinder is also an ellipse. We shall say that the cylinder belongs to the ellipse. The following result is the analogue of Theorem 2.3 for cylinders. It may be proved directly or obtained as a limiting case of Theorem 2.3 as the bloom tends to zero.

\begin{theorem}
Let E be an ellipse with eccentricity $0 < e < 1$. Then there are exactly two cylinders belonging to E. The axes of these cylinders pass through the center of the ellipse and form the asymptotes of the hyperbola of loci of vertices of cones belonging to E.
\end{theorem}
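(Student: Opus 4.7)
The plan is to adapt the Dandelin sphere construction used in the proofs of Theorems 2.1 and 2.2 to the cylindrical setting. For a cylinder belonging to E, the two Dandelin spheres are tangent to $\Pi$ at $F_1$ and $F_2$ respectively, lie on opposite sides of $\Pi$, and share a common radius $r$ equal to the cylinder's radius (in contrast to the cone case, where the two radii generally differ). Their centers both lie on the cylinder's axis.

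I would then proceed in four steps. First, as in the proof of Theorem 2.1, each sphere center lies on the line through its respective focus perpendicular to $\Pi$; these two perpendicular lines determine the plane $\Pi'$, which therefore contains both sphere centers and hence the axis. Second, because the two Dandelin spheres have the same radius and sit on opposite sides of $\Pi$, the midpoint of their centers is the midpoint of $\overline{F_1 F_2}$, namely the center of E; hence the axis passes through the center of E. Third, a standard computation for the intersection of a cylinder with an oblique plane forces $r$ to equal the semi-minor axis $b$ of E. Fourth, set coordinates on $\Pi'$ with the center of E at the origin and the foci at $(\pm c, 0)$, where $c = ea$; then the sphere centers sit at $(-c, b)$ and $(c, -b)$ (or at $(-c, -b)$ and $(c, b)$), so the axis direction is proportional to $(c, -b)$ or $(c, b)$. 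These are precisely the asymptotic directions of the hyperbola $E'$, whose equation in these coordinates is $x^2/c^2 - y^2/b^2 = 1$.

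The main obstacle is the third step, identifying the Dandelin sphere radius with $b$. The cleanest argument is direct: if a cylinder of radius $r$ meets a plane at angle $\alpha$ (between the cylinder axis and the plane normal), the intersection is an ellipse of semi-minor axis $r$ and semi-major axis $r/\cos\alpha$, which forces $r = b$ and $\cos\alpha = \sqrt{1-e^2}$. Alternatively, one may obtain the theorem as the limit $\theta \to 0$ in Theorem 2.3: the cone vertex $V$ recedes along $E'$ to infinity in an asymptotic direction, the incircle of triangle $APV$ approaches a fixed circle of radius $b$ tangent to $\overline{AP}$ at a focus, and the cone degenerates into the claimed cylinder. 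Uniqueness then follows because the entire construction is reversible and the only remaining freedom is the binary choice of asymptotic direction, which corresponds to reflecting the cylinder across $\Pi$.
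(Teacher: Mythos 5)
Your proposal is correct, and it is worth noting that the paper does not actually prove Theorem 2.4: it only remarks that the result ``may be proved directly or obtained as a limiting case of Theorem 2.3,'' so your direct Dandelin-sphere argument supplies a proof the paper leaves to the reader, running parallel to its treatment of the cone case in Theorems 2.1 and 2.2. Your decomposition is sound: the equality of the two inscribed spheres' radii is exactly what replaces the incircle computation of Theorem 2.2, and it is what forces the axis to pass through the center of E rather than merely lie in $\Pi'$; the identification $r=b$ then places the centers at $(\mp c,\pm b)$ and gives axis direction $(c,\mp b)$, which are indeed the asymptotic directions of $x^2/c^2-y^2/b^2=1$, since that hyperbola has vertices $(\pm ea,0)$ and foci $(\pm a,0)$, so its transverse semi-axis is $c=ea$ and its conjugate semi-axis satisfies $B^2=a^2-c^2=b^2$ (this also recovers the eccentricity $1/e$ claimed after Theorem 2.2). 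The one point to tighten is the closing uniqueness/existence claim: your argument shows that any cylinder belonging to E must be one of the two candidates, but ``the construction is reversible'' should be replaced by the explicit check that each candidate cuts $\Pi$ in E itself and not merely in a congruent ellipse; this is immediate, since the intersection is an ellipse centered at the center of E, with major axis along the projection of the cylinder axis onto $\Pi$ (the line $\overline{AP}$) and semi-axes $b$ and $b/\cos\alpha=a$. Your limiting sketch is the paper's other suggested route and is also consistent: as $V$ recedes along an asymptote, the incircle of triangle $APV$ tends to a circle of radius $b$ tangent to $\overline{AP}$ at a focus, which is precisely the cross-section of the limiting cylinder in $\Pi'$.
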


For completeness, we conclude by stating the analogous results for circles. If E is a circle lying in plane $\Pi$, then there is exactly one cylinder belonging to E. Its axis is perpendicular to $\Pi$ through the center of E. For each angle $\theta \in (0,\pi)$ there are exactly two cones belonging to E. The vertices of these cones lie on the cylinder axis and are obtained from each other by reflection in $\Pi.$

\section{The Kepler Problem}

According to Newtonian mechanics the position vector $\mathbf r = \mathbf r(t)$ of a body at time $t$ moving under the influence of a Coulomb or gravitational source\footnote{The source is assumed to be rigidly fixed in space. Real two body problems can be reduced to this case.} located at the origin satisfies the equations of motion
\begin{equation}\label{eqmotion}
\ddot{\mathbf r} = -\frac{k\mathbf r}{r^3},
\end{equation}
where $k$ is a measure of the force per unit mass generated by the source. For sentimental reasons we shall call the source ``the sun'' and the body ``the planet.'' Let $\mathbf p = m\dot{\mathbf r}$ be the momentum of the planet, where $m$ is its mass.

As is well known, 4 integrals (conserved quantities) of the motion can readily be deduced from \eqref{eqmotion}: The 3 components of the angular momentum $\mathbf l = \mathbf r \times \mathbf p,$ and the energy, $h = \frac{p^2}{2m} - \frac{mk}{r}.$

We shall be concerned here only with negative energy ($h < 0$) orbits. Such orbits are, as it turns out, generally eccentric ellipses, though there are two annoying special cases: circular orbits and catastrophic ones that plunge straight into the sun. The latter may be avoided by the assumption $\mathbf l \ne 0.$
\theoremstyle{plain}

\begin{theorem}
If $\mathbf l \ne 0$ and $h < 0$ then we have the following {\it a priori} bounds on $r$ and $p$:
\begin{equation}
\frac{l^2}{2m^2k} < r < \frac{mk}{|h|};
\end{equation}
\begin{equation}
\frac{l|h|}{mk} < p < \frac{2m^2k}{l}.
\end{equation}
\end{theorem}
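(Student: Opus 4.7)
The plan is to exploit only two elementary facts: the scalar form of the energy integral
\[
\tfrac{p^{2}}{2m} = h + \tfrac{mk}{r} = \tfrac{mk}{r} - |h|,
\]
and the cross-product estimate $l = |\mathbf{r} \times \mathbf{p}| \le r p$, which comes from $|\sin \theta| \le 1$. From $\mathbf{l} \ne \mathbf 0$ I will also use that $p$ never vanishes: if $p=0$ at some instant then $\mathbf l = \mathbf r \times \mathbf p = \mathbf 0$ at that instant, contradicting conservation of $\mathbf l$.

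First I would derive the upper bound on $r$. Since $p > 0$ everywhere, the energy identity gives $\frac{mk}{r} - |h| = \frac{p^2}{2m} > 0$, hence $r < \frac{mk}{|h|}$. Next, for the lower bound on $r$, I will feed the cross-product inequality $p^{2} \ge l^{2}/r^{2}$ back into the energy identity to get
\[
\tfrac{l^{2}}{2 m r^{2}} \;\le\; \tfrac{p^{2}}{2m} \;=\; \tfrac{mk}{r} - |h| \;<\; \tfrac{mk}{r},
\]
and solving the outer inequality for $r$ yields $r > \frac{l^{2}}{2 m^{2} k}$.

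With the $r$-bounds in hand, the $p$-bounds come by ``bootstrapping.'' Substituting $r > \frac{l^{2}}{2m^{2}k}$ into the energy identity gives
\[
p^{2} \;=\; \tfrac{2m^{2}k}{r} - 2m|h| \;<\; \tfrac{4 m^{4} k^{2}}{l^{2}} - 2m|h| \;<\; \tfrac{4 m^{4} k^{2}}{l^{2}},
\]
which is the desired upper bound. Finally, combining $p \ge l/r$ with $r < mk/|h|$ yields $p > \frac{l|h|}{mk}$, the lower bound.

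There is no serious obstacle; the only care needed is to track why every inequality is strict. Strictness on $r < mk/|h|$ and on $p > l|h|/(mk)$ relies on $p > 0$ (hence on $\mathbf l \ne \mathbf 0$), while strictness on $r > l^{2}/(2m^{2}k)$ and on $p < 2m^{2}k/l$ relies on $h < 0$ (the term $-|h|$ must not be dropped prematurely). Each of the four bounds traces back to exactly one of the two hypotheses, which is what one would hope for.
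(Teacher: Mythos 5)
Your proof is correct and follows essentially the same route as the paper: upper bound on $r$ from positivity of the kinetic energy, lower bound on $r$ from combining $rp \ge l$ with the energy identity, and then the two $p$-bounds obtained by feeding the $r$-bounds back in. The only cosmetic difference is that you get $rp \ge l$ directly from $|\sin\theta|\le 1$, while the paper derives it from the identity $r^2p^2 = \gamma^2 + l^2$ involving the virial $\gamma = \mathbf p\cdot\mathbf r$.
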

\begin{proof}
Since $\mathbf l \ne 0$, $p$ can never vanish. Thus $\frac{mk}r = \frac{p^2}{2m} - h > -h,$ and the right-hand side of (3.2) follows.

For the left-hand side of (3.2), we introduce the pivotal quantity $\gamma = \mathbf p \cdot \mathbf r,$ the
{\it virial of Clausius.} By the pythagorean theorem,
\begin{equation}
r^2p^2 = \gamma^2 + l^2.
\end{equation}
Since $h < 0$ we have $\frac{mk}r > \frac{p^2}{2m},$ hence $p^2r^2 < 2m^2kr.$ This, in conjunction with (3.4) yields the left-hand side of (3.2).

Again, by (3.4) and (3.2) we have $p^2 \ge \frac{l^2}{r^2} > \frac{l^2h^2}{m^2k^2},$ and the left-hand side of (3.3) follows.

Finally, combining $h < 0$ with the left-hand side of (3.2) we have $\frac{p^2}{2m} < \frac{mk}r \le \frac{2m^3k^2}{l^2}$, and the right-hand side of (3.3) follows.
\end{proof}

Theorem 3.1 and standard results on existence and uniqueness of solutions of ordinary differential equations (see, e.g., \cite[pp. 15-19]{siegel}) show that, given $\mathbf r_0$ and $\mathbf p_0$ satisfying $\mathbf r_0 \times \mathbf p_0 \ne \mathbf 0$ and $\frac{p_0^2}{2m} - \frac{mk}{r_0} < 0,$ the system \eqref{eqmotion} has a unique solution $\mathbf r$ satisfying $\mathbf r(0) = \mathbf r_0$ and $m\dot{\mathbf r}(0) = \mathbf p_0.$ The components of $\mathbf r$ are, moreover, analytic functions of $t \in (-\infty, \infty).$

If $f$ is a function of time, we define its {\it time average,} $\langle f \rangle$ by
\begin{equation*}
\langle f \rangle = \lim_{t \to \infty}\frac1t \int_0^t f(s)\,ds,
\end{equation*}
assuming the limit exists. Also denote by $T = \frac{p^2}{2m}$ and $V = -\frac{mk}r$ the kinetic and potential energy respectively.

Our next result is a very special case of the celebrated {\it Virial Theorem} of statistical mechanics.
\begin{theorem}
Under the same hypotheses as Theorem 3.1, both $\langle T \rangle$ and $\langle V \rangle$ exist and satisfy
\begin{equation}
\langle T \rangle = -\frac12\langle V \rangle = -h.
\end{equation}
\end{theorem}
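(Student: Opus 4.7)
The plan is to use the virial of Clausius $\gamma = \mathbf{p}\cdot\mathbf{r}$ that was already introduced in the proof of Theorem 3.1, and exploit the fact that Theorem 3.1 gives us $\gamma$ as a bounded function of time. The key identity to establish first is
\begin{equation*}
\dot{\gamma} = 2T + V.
\end{equation*}
This follows by a direct computation: $\dot{\gamma} = \dot{\mathbf p}\cdot\mathbf r + \mathbf p\cdot\dot{\mathbf r}$. The first term is $m\ddot{\mathbf r}\cdot \mathbf r = -mk/r = V$ by the equation of motion \eqref{eqmotion}, and the second is $p^2/m = 2T$.

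Next, I would rewrite $2T + V = T + (T + V) = T + h$ using conservation of energy, so that integration of the identity over $[0,t]$ gives
\begin{equation*}
\int_0^t T(s)\,ds + h\,t = \gamma(t) - \gamma(0).
\end{equation*}
Dividing by $t$, the problem reduces to showing the right-hand side tends to zero as $t\to\infty$. This is where Theorem 3.1 does the essential work: the bounds there imply $|\gamma| \le rp < (mk/|h|)(2m^2k/l) = 2m^3k^2/(l|h|)$, so $\gamma$ is a bounded function of time, and therefore $(\gamma(t)-\gamma(0))/t \to 0$.

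Hence $\frac{1}{t}\int_0^t T\,ds \to -h$, which establishes both the existence of $\langle T\rangle$ and the value $\langle T\rangle = -h$. The corresponding statement for $\langle V\rangle$ then follows immediately from $T + V = h$, which gives $\langle V\rangle = h - \langle T\rangle = 2h = -2\langle T\rangle$, completing the proof.

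I do not anticipate a serious obstacle. The only subtle point is that one must recognize that the a priori bounds of Theorem 3.1 are precisely what is needed to kill the boundary term $\gamma(t)/t$; without the negative-energy hypothesis $\gamma$ need not be bounded, and the virial theorem in this simple form would fail. (For instance, on parabolic orbits $r$ and hence $\gamma$ grow without bound.)
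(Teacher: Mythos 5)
Your proposal is correct and is essentially the paper's own argument: the paper likewise computes $\dot{\gamma} = 2T + V$, invokes the boundedness of $\gamma$ guaranteed by the a priori estimates of Theorem 3.1 to conclude $\langle \dot{\gamma}\rangle = 0$, and combines this with $T + V = h$ to extract both averages. Your write-up merely spells out the boundary-term estimate $(\gamma(t)-\gamma(0))/t \to 0$ in more detail.
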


\begin{proof}
By Theorem 3.1 each of the functions $T,V,$ and $\gamma$ is bounded. Thus, by the fundamental theorem of calculus,
$\langle \dot{\gamma}\rangle = 0.$ On the other hand, $\dot{\gamma} = \frac{p^2}{m}-\frac{mk}r = 2T + V.$ Since
$h = T + V$ obviously has an average value, it follows that both averages $\langle T\rangle$ and $\langle V\rangle$ exist and satisfy (3.5).
\end{proof}
\theoremstyle{remark}

\begin{remark}
The virial $\gamma$ itself is the derivative of a bounded function, $I = \frac12 mr^2$, the {\it moment of inertia.} Thus we also have $\langle \gamma\rangle = 0.$
\end{remark}

The final result of this section will be needed in Section 4.
\begin{theorem}
Assume $\mathbf l \ne \mathbf 0$ and $h < 0.$ Then $\gamma$ vanishes for infinitely many $t$.
\end{theorem}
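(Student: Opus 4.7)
The plan is to argue by contradiction. Suppose $\gamma$ vanishes only finitely often. Since $\mathbf{r}(t)$, and hence $\gamma(t) = \mathbf{p}\cdot\mathbf{r}$, is real-analytic on all of $\mathbb{R}$ (as observed just after Theorem 3.1), $\gamma$ must keep a constant sign on some half-line $[t_0,\infty)$. Because $\gamma = \dot I$ where $I = \tfrac12 m r^2$ is the moment of inertia of Remark 3.1, $I$ is strictly monotone on $[t_0,\infty)$; being bounded by Theorem 3.1, it converges to some $I_\infty$, whence $r(t) \to r_\infty > 0$.

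Next I would pin down the value of $r_\infty$ by exploiting that the time average of a convergent continuous function equals its limit. Since $V = -mk/r \to -mk/r_\infty$, Theorem 3.2 forces $V_\infty = 2h$, i.e.\ $r_\infty = mk/(2|h|)$; then $T = h - V \to -h$ and $p^2 \to 2m|h|$. From (3.4), $\gamma^2 = r^2 p^2 - l^2$ also converges; as $\gamma$ keeps constant sign, $\gamma$ itself converges to a limit $L$, and boundedness of $I$ (via $\dot I = \gamma$) forces $L = 0$. Passing to the limit in (3.4) therefore yields $l^2 = r_\infty^2 p_\infty^2 = m^3 k^2/(2|h|)$.

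The decisive step is to substitute this value of $l^2$ back into (3.4) together with the energy equation $p^2 = 2m(h + mk/r)$, which rearranges the identity into the perfect-square form
\[
\gamma^2 \;=\; 2mhr^2 + 2m^2 kr - l^2 \;=\; -2m|h|\!\left(r - \frac{mk}{2|h|}\right)^{\!2}.
\]
Since $\gamma^2 \ge 0$ while the right-hand side is $\le 0$, $r$ must equal $mk/(2|h|)$ identically throughout the orbit; but then $\dot r \equiv 0$ gives $\gamma \equiv 0$, contradicting the assumed constant (nonzero) sign of $\gamma$ on $[t_0,\infty)$.

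The one point that requires a little care is the upgrade from convergence of $r$ to convergence of $\gamma$, but as sketched above it needs no ODE-theoretic machinery: $\gamma^2$ has a limit directly from (3.4), constant sign of $\gamma$ promotes this to a limit for $\gamma$, and that limit must vanish or else $I = \int\gamma\,dt + \text{const}$ would escape its bounds. Everything else is routine algebra together with the standing results of Section 3.
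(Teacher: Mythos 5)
Your proof is correct and follows essentially the same route as the paper's: assume the conclusion fails, deduce that $r$ converges, identify $r_\infty$ via the virial theorem, derive the circularity relation $h=-\frac{m^3k^2}{2l^2}$, and show that this relation forces $r$ to be constant. The only differences are cosmetic (you phrase the last step as the perfect-square identity $\gamma^2=-2m|h|\left(r-\frac{mk}{2|h|}\right)^2$ rather than minimizing $\frac{l^2}{2mr^2}-\frac{mk}{r}$, and you invoke analyticity where continuity of $\gamma$ already suffices), so no further comparison is needed.
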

\begin{proof}
We shall, in fact, show more: that $\dot{r} = 0$ for infinitely many $t$, which suffices, since $\gamma = m\dot{\mathbf r}\cdot\mathbf r = m\dot{r}r.$ Since $r$ is a bounded smooth function, this can only fail to happen if $\lim_{t \to \infty}r(t) = r_{\infty}$ exists. Now there is only one value for this limit that is compatible with Theorem 3.2, i.e.,
\begin{equation}
r_{\infty} = -\frac{mk}{2h}.
\end{equation}
By energy conservation, $p$ then also has a limit, $p_{\infty},$ and by (3.4) and Remark 3.1 we must have
$r_{\infty}p_{\infty} = l.$ Together with (3.6) this implies a relation between $h$ and $l$:
\begin{equation}
h = -\frac{m^3k^2}{2l^2}.
\end{equation}
On the other hand, by (3.4), $pr \ge l,$ so $h = \frac{p^2}{2m} - \frac{mk}r \ge \frac{l^2}{2mr^2} - \frac{mk}r.$ A little calculus shows that the latter quantity has a negative minimum value of $-\frac{m^3k^2}{2l^2},$ achieved if and only if $r = \frac{l^2}{m^2k}.$ Thus, if (3.7) holds, $r$ must be constant, and in that case $\dot{r}$ is identically zero.
\end{proof}

\begin{remark}
From the above proof, it follows that (3.7) is the {\it circularity condition:} An orbit of finite negative energy is circular if and only if (3.7) holds at any time $t$ (hence, at all times $t$.)
\end{remark}

\section{Cones, Cylinders, and Kepler's Laws}

We continue to assume as in section 3 that the initial conditions for equations (3.1) are such that $h < 0$ and
$\mathbf l \ne \mathbf 0.$ If the solution of (3.1) lies for all time on the surface of a certain cone (or cylinder) $C$, we shall call $C$ an {\it orbit cone} (resp. {\it orbit cylinder.})

The main goal of this section is to prove Kepler's first law -- that planetary orbits are ellipses with the sun at one focus -- by the somewhat unusual means of establishing the existence of an orbit cylinder. In so doing, we shall also classify all possible orbit cylinders and cones, and provide a proof of Kepler's third law.

Let
\begin{equation}
\mathbf A = \mathbf p \times \mathbf l - m^2k \frac{\mathbf r}r
\end{equation}
be the {\it Laplace-Runge-Lenz} (LRL) vector.
\begin{theorem}
The vector $\mathbf A$ is an integral of (3.1), i.e., if $\mathbf r$ satisfies (3.1), then $\mathbf A$ is constant in time.
\end{theorem}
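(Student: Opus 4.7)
The plan is to differentiate $\mathbf A$ with respect to time and show $\dot{\mathbf A} = \mathbf 0$, using only the equation of motion \eqref{eqmotion} and the conservation of $\mathbf l$ (already noted in section 3).

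First I would write $\dot{\mathbf A} = \dot{\mathbf p}\times\mathbf l + \mathbf p\times\dot{\mathbf l} - m^2k\,\frac{d}{dt}\!\left(\frac{\mathbf r}{r}\right)$. The middle term vanishes since $\mathbf l$ is conserved. For the first term, substitute $\dot{\mathbf p} = m\ddot{\mathbf r} = -mk\mathbf r/r^3$ and $\mathbf l = m\,\mathbf r\times\dot{\mathbf r}$, giving
\begin{equation*}
\dot{\mathbf p}\times\mathbf l = -\frac{m^2k}{r^3}\,\mathbf r\times(\mathbf r\times\dot{\mathbf r}).
\end{equation*}
Apply the vector triple product identity $\mathbf r\times(\mathbf r\times\dot{\mathbf r}) = (\mathbf r\cdot\dot{\mathbf r})\mathbf r - r^2\dot{\mathbf r}$ and use $\mathbf r\cdot\dot{\mathbf r} = r\dot r$ to obtain
\begin{equation*}
\dot{\mathbf p}\times\mathbf l = -\frac{m^2k\,\dot r}{r^2}\mathbf r + \frac{m^2k}{r}\dot{\mathbf r}.
\end{equation*}

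For the last term, a direct quotient-rule computation gives $\frac{d}{dt}(\mathbf r/r) = \dot{\mathbf r}/r - \dot r\,\mathbf r/r^2$, so $m^2k\,\frac{d}{dt}(\mathbf r/r)$ cancels $\dot{\mathbf p}\times\mathbf l$ term-by-term, yielding $\dot{\mathbf A}=\mathbf 0$. The only non-routine ingredient is the triple product expansion; the main potential pitfall is a sign error in tracking $\mathbf r\cdot\dot{\mathbf r} = r\dot r$ versus $\dot{\mathbf r}\cdot\mathbf r$, but there is no real obstacle — the proof is a one-line calculation once one sees that the $\mathbf p\times\dot{\mathbf l}$ contribution drops out.
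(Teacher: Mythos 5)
Your computation is correct: the product rule, the vanishing of $\mathbf p\times\dot{\mathbf l}$ by conservation of angular momentum, the expansion $\mathbf r\times(\mathbf r\times\dot{\mathbf r}) = (\mathbf r\cdot\dot{\mathbf r})\mathbf r - r^2\dot{\mathbf r}$, and the identity $\mathbf r\cdot\dot{\mathbf r}=r\dot r$ all check out, and the two terms cancel exactly as you claim. The paper itself gives no proof of Theorem 4.1 --- it simply cites Goldstein's text --- and your direct differentiation is precisely the standard argument found there, so your proposal correctly supplies the omitted verification.
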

See, e.g., Goldstein's text\footnote{Goldstein uses $k$ for the total gravitational force on the planet when $r = 1$, not the force per unit mass. Thus his $k$ corresponds with our $mk$.} \cite{goldsteintext} for a proof. Also see \cite{goldsteinpaper} for an account of the history of this remarkable quantity.

As we show next, the form and character of orbit cylinders and cones may be deduced from Theorem 4.1, but this approach seems unfair in that it is possible to derive the complete orbit equation in a few deft strokes \cite{goldsteintext} from conservation of the LRL vector. Accordingly, after discussing what can be deduced from Theorem 4.1, we present an alternative approach that does not depend on knowing Theorem 4.1.

\begin{theorem}
Assume $h < 0, \mathbf l \ne \mathbf 0,$ and that (3.7) does not hold, i.e., the orbit is not a circle. Then the quantity $|(\mathbf r + \beta^2\mathbf A)\times(\mathbf l + \beta\mathbf A)|$ is constant in time if and only if
$\beta = \pm\sqrt{\frac{-1}{2mh}}.$
\end{theorem}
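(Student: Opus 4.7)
The strategy is to reduce the squared norm $|(\mathbf{r}+\beta^{2}\mathbf{A})\times(\mathbf{l}+\beta\mathbf{A})|^{2}$ to an expression whose entire time-dependence is carried by $r^{2}$, and then read off when its coefficient vanishes. The ingredients I will use are the conservation of $\mathbf{A}$, $\mathbf{l}$, and $h$ (Theorem 4.1 and the integrals already listed in Section 3), the Pythagorean identity $r^{2}p^{2}=\gamma^{2}+l^{2}$ from (3.4), and the energy relation $p^{2}=2mh+2m^{2}k/r$.

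Expanding the cross product by bilinearity and using $\mathbf{A}\times\mathbf{A}=\mathbf{0}$ gives
\[
\mathbf{M}:=(\mathbf{r}+\beta^{2}\mathbf{A})\times(\mathbf{l}+\beta\mathbf{A})=\mathbf{r}\times\mathbf{l}+\beta^{2}\mathbf{A}\times\mathbf{l}+\beta\,\mathbf{r}\times\mathbf{A}.
\]
Since $\mathbf{l}$ is normal to the orbit plane while $\mathbf{r}$ and $\mathbf{A}$ both lie in it (one checks $\mathbf{A}\cdot\mathbf{l}=(\mathbf{p}\times\mathbf{l})\cdot\mathbf{l}-m^{2}k(\mathbf{r}\cdot\mathbf{l})/r=0$), the first two summands lie in the orbit plane and the third is parallel to $\mathbf{l}$. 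A BAC$-$CAB expansion gives $\mathbf{r}\times\mathbf{A}=\mathbf{r}\times(\mathbf{p}\times\mathbf{l})=-\gamma\mathbf{l}$. Taking magnitudes of these two orthogonal pieces yields
\[
|\mathbf{M}|^{2}=l^{2}\bigl(|\mathbf{r}+\beta^{2}\mathbf{A}|^{2}+\beta^{2}\gamma^{2}\bigr).
\]

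Now I will rewrite the bracketed quantity using $\mathbf{r}\cdot\mathbf{A}=(\mathbf{r}\times\mathbf{p})\cdot\mathbf{l}-m^{2}kr=l^{2}-m^{2}kr$, together with $\gamma^{2}=r^{2}p^{2}-l^{2}=2mhr^{2}+2m^{2}kr-l^{2}$. Expanding $|\mathbf{r}+\beta^{2}\mathbf{A}|^{2}=r^{2}+2\beta^{2}(l^{2}-m^{2}kr)+\beta^{4}A^{2}$ and adding $\beta^{2}\gamma^{2}$, the two $m^{2}kr$ contributions cancel, and after absorbing the terms involving only the conserved scalars $A$, $l$, $h$ into an additive constant, the only time-dependent piece that survives is $(1+2\beta^{2}mh)\,r^{2}$.

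Therefore $|\mathbf{M}|$ is constant in time if and only if $(1+2\beta^{2}mh)r^{2}$ is constant. If $1+2\beta^{2}mh\neq 0$ this forces $r$ to be constant, but by Remark 3.2 (and Theorem 3.4) a constant $r$ is precisely the circularity case that has been excluded. Hence $1+2\beta^{2}mh=0$, i.e., $\beta=\pm\sqrt{-1/(2mh)}$, and conversely for these two values the time-dependent piece vanishes identically. The main obstacle is the purely algebraic bookkeeping in the third paragraph: one must see that the $m^{2}kr$ and $l^{2}$ cross-terms cancel cleanly so that only $r^{2}$ remains time-varying; beyond that, no idea new to this section is required.
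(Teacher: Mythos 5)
Your argument is correct, and the two halves relate to the paper's proof differently. The opening decomposition is identical: both you and the paper split the cross product into a piece perpendicular to $\mathbf l$ and a piece parallel to it (via $\mathbf r\times\mathbf A = -\gamma\mathbf l$), arriving at $l^2\bigl(r^2 + 2\beta^2\mathbf r\cdot\mathbf A + \beta^4A^2 + \beta^2\gamma^2\bigr)$. From there the paper differentiates in time, using $\frac{d}{dt}r^2 = \frac2m\gamma$, $\frac{d}{dt}(\mathbf r\cdot\mathbf A) = -\frac{mk}{r}\gamma$, and $\frac{d}{dt}\gamma^2 = 2\gamma(\frac{p^2}{m}-\frac{mk}{r})$, to find the derivative equals $2\gamma l^2(\frac1m + 2h\beta^2)$, and then must argue that $\gamma\not\equiv 0$ for non-circular orbits. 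You instead eliminate all time dependence algebraically, substituting $\mathbf r\cdot\mathbf A = l^2 - m^2kr$ and $\gamma^2 = 2mhr^2 + 2m^2kr - l^2$ to obtain the closed form $l^2\bigl[(1+2mh\beta^2)r^2 + \beta^2l^2 + \beta^4A^2\bigr]$; the cancellation of the $m^2kr$ terms is exactly right. This is the same substitution the paper reserves for its cone computation (the parenthetical remark after the coefficients $C,D,E$), so your route is arguably more uniform with the rest of the section, and it yields a sharper statement: constancy of the whole expression is manifestly equivalent to constancy of $(1+2mh\beta^2)r^2$, with no derivative needed. The degenerate case is handled equivalently in both proofs: you exclude constant $r$, the paper excludes $\gamma\equiv 0$, and since $\gamma = mr\dot r$ these are the same condition; both proofs defer the small verification that this forces (3.7). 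One trivial slip: you cite ``Theorem 3.4,'' which does not exist; you mean Theorem 3.3 and Remark 3.2.
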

\begin{proof}
Since $\mathbf A$ and $\mathbf l$ are orthogonal, we have
\begin{equation*}
\begin{aligned}
|(\mathbf r + \beta^2\mathbf A)\times(\mathbf l + \beta\mathbf A)|^2
 &= |(\mathbf r + \beta^2\mathbf A)\times\mathbf l|^2 + \beta^2|\mathbf r \times \mathbf A|^2 \\
 &= |(\mathbf r + \beta^2\mathbf A)\times\mathbf l|^2 + \beta^2|\mathbf r \times (\mathbf p \times \mathbf l)|^2 \\
 &= l^2(r^2 + 2\beta^2\mathbf r \cdot \mathbf A + \beta^4 A^2 + \beta^2\gamma^2),
 \end{aligned}
 \end{equation*}
 where $\gamma = \mathbf r \cdot \mathbf p$ is the virial introduced in section 3.

 Using $\frac{dr^2}{dt} = \frac2m\gamma,$ $\frac{d \gamma^2}{dt} = 2\gamma (\frac{p^2}m - \frac{m k}r)$, and
 $\frac{d}{dt}(\mathbf r \cdot \mathbf A) = -\frac{m k}r\gamma,$ we find that a sufficient condition for
 $|(\mathbf r + \beta^2\mathbf A)\times(\mathbf l + \beta\mathbf A)|$ to be constant is
 \begin{equation*}
 0 = \frac1m - \beta^2\frac{m k}r + \beta^2\left(\frac{p^2}m - \frac{m k}r\right) = \frac1m + 2h\beta^2.
 \end{equation*}

 Conversely, the conclusion $\beta^2 = \frac{-1}{2mh}$ follows unless it happens that $\gamma$ is identically zero. But in that case it is easy to show that (3.7) must hold, a possibility that is explicitly ruled out in the statement of the theorem.
 \end{proof}

 For given $\beta, \mathbf A,$ and $\mathbf l$, the locus of points $\mathbf r$ satisfying
 $|(\mathbf r + \beta^2\mathbf A)\times(\mathbf l + \beta\mathbf A)|$ = constant $> 0$ is a cylinder with axis passing through the point at vector $-\beta^2\mathbf A$ offset from the sun, in the direction of the vector $\mathbf l + \beta\mathbf A.$ Since the orbit is confined to a plane with normal vector $\mathbf l$, it follows from Theorem 4.2 that non-circular orbits lie on the locus of intersection of a cylinder and a plane that is not parallel to the cylinder, i.e., an ellipse. 

 Turning to an analogous result for orbit cones, let $0 < \theta < \pi$ and set $\rho = \tan(\frac{\theta}2).$ Let
 $\mathbf x$ and $\mathbf y$ be fixed non-zero vectors. Then the locus of points $\mathbf r$ satisfying
 \begin{equation}
 \frac{|(\mathbf r - \mathbf x)\times \mathbf y|}{y} = \rho\left| \mathbf y - \frac{(\mathbf r-\mathbf x)\cdot\mathbf y}{y^2}\mathbf y\right|
 \end{equation}
 is a cone of bloom $\theta$ with axis passing through $\mathbf x$ in the direction of $\mathbf y$ and having vertex at $\mathbf x + \mathbf y.$

 Motivated by Theorem 4.2, we seek constants $\alpha \ne 0, \beta \ne 0,$ and $\epsilon > 0$ so that the cone with
 $\mathbf x = \alpha\mathbf A$ and $\mathbf y = \epsilon(\mathbf l + \beta\mathbf A)$ is an orbit cone. With these choices, equation (4.2) reduces to the form $Cr^2 + Dr + E = 0$, where
 $$
 C = l^2(1 + 2mh\beta^2) - m^4k^2\rho^2\beta^2
$$
$$
 D = 2m^2kl^2(\beta^2 + \alpha) - 2m^2k\rho^2\beta(\epsilon l^2 + \epsilon\beta^2A^2 + \alpha\beta A^2 - \beta l^2)
$$
$$
 E = (-2\alpha l^2 + \alpha^2 A^2 - \beta^2 l^2)l^2 - \rho^2(\epsilon l^2 + \epsilon\beta^2A^2 + \alpha\beta A^2 - \beta l^2)^2.
 $$

 (In deriving these these coefficients it is useful to note that, by (3.4), $\gamma^2 = r^2p^2 - l^2 = r^2(\frac{2m^2k}r + 2mh) - l^2.$)

 Setting $C,D,$ and $E$ equal to zero and solving for $\alpha, \beta,$ and $\epsilon$ we find that
 \begin{equation}
  \epsilon = \pm \frac{l}{m^2k\rho}
 \end{equation}
 \begin{equation}
  \beta^2 = \frac{-1}{2mh - \epsilon^{-2}}
 \end{equation}
 \begin{equation}
 \alpha = \frac{1 + \epsilon^{-1}|\beta|}{2mh}
 \end{equation}

 In summary, we have
 \begin{theorem}
 Assume $h < 0, \mathbf l \ne \mathbf 0$ and that (3.7) does not hold. Let $0 < \theta < \pi, \rho = \tan(\frac{\theta}2)$ and $\beta, \alpha$ and $\epsilon$ be given by (4.3)-(4.5) above. Then there are 4 orbit cones: For each of the two choices of $\alpha$ in (4.5) there are two cones of bloom $\theta$ with axis passing through $\alpha\mathbf A.$ One has vertex at $\alpha\mathbf A + |\epsilon|(\mathbf l + |\beta|\mathbf A)$ and the other has vertex at $\alpha\mathbf A - |\epsilon|(\mathbf l - |\beta|\mathbf A).$
 \end{theorem}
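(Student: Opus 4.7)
The plan is to show directly that for the parameters $(\alpha,\beta,\epsilon)$ given by (4.3)--(4.5) the locus (4.2) contains the entire orbit, and then to tally the four cones produced by the allowed sign choices.

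The first step is to substitute $\mathbf x=\alpha\mathbf A$ and $\mathbf y=\epsilon(\mathbf l+\beta\mathbf A)$ into (4.2), square both sides, and clear $y^2$. The reduction to a scalar relation in $r$ alone uses four ingredients: the orthogonality $\mathbf A\cdot\mathbf l=0$; the identity $\mathbf r\cdot\mathbf A=l^2-m^2kr$, which follows by expanding $\mathbf r\cdot(\mathbf p\times\mathbf l)=\mathbf l\cdot(\mathbf r\times\mathbf p)=l^2$ in the definition of $\mathbf A$; the virial identity $\gamma^2=r^2p^2-l^2=2m^2kr+2mhr^2-l^2$ supplied by energy conservation; and the constancy of $l$, $h$, and $A$. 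Carrying these through converts (4.2) into the quadratic $Cr^2+Dr+E=0$ with the coefficients displayed in the text.

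Because the orbit is non-circular by hypothesis, $r(t)$ is a non-constant analytic function and therefore takes infinitely many values, so the quadratic must vanish identically: $C=D=E=0$. Setting $C=0$ and using $\rho=\tan(\theta/2)$ yields (4.3) for $\epsilon$ and (4.4) for $\beta^2$, which is automatically positive since $h<0$; the equation $E=0$ then becomes a quadratic in $\alpha$ that, after substitution of the preceding values, factors and collapses to (4.5). The main obstacle is verifying that the middle equation $D=0$ is \emph{automatic} on the common zero set of $C$ and $E$; this is the longest algebraic check in the argument, but it reduces to a regrouping of terms after (4.3) and (4.4) are inserted.

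Finally, the counting is straightforward: the independent sign choices $\epsilon=\pm|\epsilon|$ and $\beta=\pm|\beta|$ produce four parameter combinations, hence four cones, each with vertex $\alpha\mathbf A+\epsilon(\mathbf l+\beta\mathbf A)$, axis through $\alpha\mathbf A$ parallel to $\mathbf l+\beta\mathbf A$, and bloom $\theta$ by construction. The formula (4.5) produces two distinct values of $\alpha$ from the four sign combinations, so the four cones partition into two pairs sharing an axial point $\alpha\mathbf A$; rewriting the vertex expressions using $|\epsilon|$ and $|\beta|$ yields the two displayed forms in the theorem statement.
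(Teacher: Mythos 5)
Your overall strategy coincides with the paper's: substitute $\mathbf x=\alpha\mathbf A$ and $\mathbf y=\epsilon(\mathbf l+\beta\mathbf A)$ into (4.2), use $\mathbf A\cdot\mathbf l=0$, $\mathbf r\cdot\mathbf A=l^2-m^2kr$, and $\gamma^2=2m^2kr+2mhr^2-l^2$ to reduce the cone condition to $Cr^2+Dr+E=0$, argue from non-circularity that $r$ takes more than two values so that $C=D=E=0$ is forced, and then solve for the parameters and count sign choices. The reduction and the necessity argument are sound.

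The gap is in your proposed solution of the system $C=D=E=0$. The coefficient $C=l^2(1+2mh\beta^2)-m^4k^2\rho^2\beta^2$ contains no $\epsilon$, so ``setting $C=0$ yields (4.3) for $\epsilon$'' cannot stand: $C=0$ determines only $\beta^2=l^2/(m^4k^2\rho^2-2mhl^2)$, which agrees with (4.4) only after the value (4.3) of $\epsilon$ is already in hand. With $\epsilon$ still unknown, $E=0$ is a single equation in the two unknowns $\alpha$ and $\epsilon$, not ``a quadratic in $\alpha$.'' Moreover, the claim that $D=0$ is automatic on the common zero set of $C$ and $E$ is false, not merely laborious to verify: if it were true, $C=E=0$ would cut out a curve's worth of triples $(\alpha,\beta,\epsilon)$, hence a one-parameter family of orbit cones of bloom $\theta$, contradicting Theorem 2.3, which permits exactly four. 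In fact $D=0$ is precisely the equation that (together with $E=0$) pins down the vertex position along the axis: writing $Q=\epsilon l^2+\epsilon\beta^2A^2+\alpha\beta A^2-\beta l^2$, the relation $D=0$ gives $Q=l^2(\beta^2+\alpha)/(\rho^2\beta)$; substituting this into $E=0$ eliminates $\epsilon$ and yields the quadratic in $\alpha$ whose roots give (4.5), after which $\epsilon$ is recovered from $Q$ and is found to satisfy (4.3). Your final tally of four cones is then essentially right, but note that the admissible pairings of the signs of $\epsilon$ and $\beta$ with each value of $\alpha$ are dictated by the solved system (compare the two vertex formulas in the statement), not freely chosen.
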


 In the rest of this section we attempt to derive the existence of an orbit cylinder from the equations of motion in the most direct and elementary manner possible. Our method is to work in cylindrical coordinates and seek solutions that have constant radial polar coordinate. Unfortunately the proper choice of origin and orientation of coordinate axes for such a coordinate system is not obvious, and will have to emerge in the course of the analysis. Indeed, the origin will not be at the sun, since we know from the results of section 2 that orbit cylinder axes only pass through the sun in the case of circular orbits.

 To simplify matters, it is convenient to make some initial choices. First, we continue to assume, as we have above, that $\mathbf l \ne \mathbf 0, h < 0,$ and that the orbit is not circular. Since $\gamma = \mathbf p\cdot\mathbf r$ vanishes at infinitely many times (by Theorem 3.3), we may and do assume that we have $\gamma = 0$ at time $t = 0$. Let L be the line through this initial position of the planet and the sun. Let $\Pi^{\prime}$ be the unique plane through L and normal to $\mathbf p_0$, the initial momentum of the planet.

 Let $(w, \phi, z)$ be the coordinates of the planet in a cylindrical coordinate system whose origin lies on L, whose z axis lies in plane $\Pi^{\prime}$, and which is oriented so that the initial cylindrical angular ($\phi$) coordinate of the planet is zero. (For the corresponding cartesian coordinate system, $\Pi^{\prime}$ is the x-z plane.) We use $w$ for the cylindrical radial coordinate rather than the customary $r$ since we wish to reserve $\mathbf r$ for the sun-planet vector. We may write the cartesian coordinate vector of the sun as $\mathbf S = (S_x, 0, S_z).$ (At this point, the orientation of the x and z coordinate axes in $\Pi^{\prime}$ is arbitrary. The only further requirement we shall impose for now is that the choices be made in such a way that the x-coordinate of $\mathbf r_0$, the initial position of the planet relative to the sun, and $\dot\phi(0)$ are both positive\footnote{It may seem strange to make the choice of coordinate system part of the problem, but this is not uncommon in mechanics. See, for example, the solution of the Kepler problem in \cite{goldsteintext} using the Hamilton-Jacobi method.}.)

 Let $u = S_x\cos \phi.$ Because of the choices made so far, there is a relation connecting $w, z, u,$ and $r:$

 \begin{equation}
 r^2 = w^2 + z^2 -2wu - 2S_zz + S^2.
 \end{equation}

 The equations of motion (3.1), expressed in an equivalent cylindrical coordinate form, read as follows:

 \begin{equation}
 \ddot w - w\dot \phi^2 + \frac{k}{r^3}(w - u) = 0,
 \end{equation}
 \begin{equation}
 2\dot w\dot \phi + w\ddot\phi + \frac{k}{r^3}S_x\sin\phi = 0,
 \end{equation}
 \begin{equation}
 \ddot z + \frac{k}{r^3}(z - S_z) = 0.
 \end{equation}

 We are interested in solutions of (4.7)-(4.9) having constant $w = w_0 > |S_x|,$ since then the set $w = w_0$ is an orbit cylinder. The gambit of locating the origin away from the sun is compensated by the simplifications in (4.7) and (4.8) that result from $\ddot w = \dot w = 0:$

 \begin{equation}
 w\dot\phi^2 = \frac{k}{r^3}(w - u),
 \end{equation}
 \begin{equation}
 w\ddot\phi = -\frac{k}{r^3}S_x\sin\phi.
 \end{equation}

 Indeed, equations (4.10) and (4.11) are readily integrated.  Dividing (4.11) by (4.10) and rearranging, we have
 \begin{equation*}
 \frac{d}{dt}\ln(\dot\phi) = -\frac{d}{dt}\ln(w - u),
 \end{equation*}
 hence
 \begin{equation}
 \dot\phi = \frac{c}{w - u}, c > 0.
 \end{equation}

 Equation (4.12), in turn, is equivalent to {\it Kepler's Equation:}

 \begin{theorem} A $C^1$ function $\phi$ satisfies (4.12) with $\phi(0) = 0$ if and only if
 \begin{equation}
 \phi(t) - \left(\frac{S_x}{w}\right)\sin\phi(t) = \frac{c}{w}t
 \end{equation}
 holds for all $-\infty < t < \infty.$
 \end{theorem}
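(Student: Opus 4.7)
The plan is to show that Kepler's equation (4.13) is simply the first integral of the separable ODE (4.12). The approach is completely elementary: rewrite (4.12) in separated form, integrate with respect to $t$, and use the substitution $\psi = \phi(s)$ to evaluate the resulting integral in closed form. The converse direction then comes from differentiation.

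First I would observe that, because $w > |S_x|$, the denominator $w - u = w - S_x\cos\phi$ is bounded below by $w - |S_x| > 0$, so equation (4.12) may be rewritten in separated form as
\begin{equation*}
(w - S_x\cos\phi(s))\dot\phi(s) = c
\end{equation*}
for all $s$. Integrating both sides from $0$ to $t$, the right-hand side gives $ct$, while on the left-hand side the change of variables $\psi = \phi(s)$ (valid because $\dot\phi > 0$ makes $\phi$ a strict $C^1$ diffeomorphism from $\mathbb R$ onto its image) yields
\begin{equation*}
\int_0^{\phi(t)} (w - S_x\cos\psi)\,d\psi = w\phi(t) - S_x\sin\phi(t),
\end{equation*}
using $\phi(0)=0$. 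Dividing by $w$ produces exactly (4.13).

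Conversely, if a $C^1$ function $\phi$ with $\phi(0)=0$ satisfies (4.13) for all $t$, I would differentiate both sides in $t$ to obtain
\begin{equation*}
\left(1 - \frac{S_x}{w}\cos\phi(t)\right)\dot\phi(t) = \frac{c}{w},
\end{equation*}
which rearranges to (4.12). The initial condition $\phi(0)=0$ is automatic from setting $t=0$ in (4.13).

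There is no real obstacle here; the only point requiring a moment of care is justifying the change of variables in the integration, which reduces to the observation that $w > |S_x|$ forces $\dot\phi$ to have a constant sign (positive, by the sign convention $c > 0$ established just above the theorem and by the choice $\dot\phi(0) > 0$ made earlier in the section). Everything else is mechanical.
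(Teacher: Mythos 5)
Your argument is correct, and it is worth noting that the paper itself offers no proof of Theorem 4.4 at all --- the statement is simply asserted as the integrated form of (4.12) --- so your proposal supplies a justification the author leaves implicit. Two small remarks. First, the change of variables $\psi = \phi(s)$ and the accompanying worry about $\phi$ being a diffeomorphism are unnecessary: the separated integrand $(w - S_x\cos\phi(s))\dot\phi(s)$ is exactly $\frac{d}{ds}\bigl(w\phi(s) - S_x\sin\phi(s)\bigr)$, so the fundamental theorem of calculus gives the result directly with no monotonicity hypothesis (though, as you observe, $\dot\phi>0$ does hold here since $c>0$ and $w - u \ge w - |S_x| > 0$). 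Second, in the converse direction your claim that $\phi(0)=0$ is ``automatic'' from setting $t=0$ in (4.13) deserves one more line: $t=0$ gives $\phi(0) = \frac{S_x}{w}\sin\phi(0)$, and since $|S_x/w|<1$ and $|\sin x| \le |x|$ with equality only at $x=0$, the only solution is $\phi(0)=0$. With those touches the proof is complete and entirely elementary, which is surely why the paper omitted it.
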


 Substituting (4.12) in (4.10), we find that (4.10) and (4.11) entail that $r$ must be an affine function of $u$:

 \begin{equation}
 r = \lambda(w - u), \ \lambda = \left(\frac{k}{c^2w}\right)^{\frac13}.
 \end{equation}

 Conversely, if $\phi = \phi(t)$ satisfies (4.13) with $\phi(0) = 0$, and we put $r = r(t) = \lambda(w - S_x\cos\phi(t)),$ then it is easy to check that (4.10) and (4.11) hold.

 We have yet to consider the z equation, (4.9). Towards this end, note that (4.10) and (4.11) imply
 \begin{equation}
 \ddot u = \frac{k}{r^3}(\frac{S_x^2}w - u).
 \end{equation}

 Next, by completing the square in (4.6) one has the following purely algebraic result.

 \begin{theorem} If (4.14) and (4.6) hold, then $z$ is affine in $u$ if and only if
 $\lambda = (1 - \frac{S_x^2}{w^2})^{-\frac12}.$ If so, then
 \begin{equation}
 z - S_z = \pm\lambda(\frac{S_x^2}{w} - u).
 \end{equation}
 \end{theorem}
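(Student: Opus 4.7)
The plan is a direct algebraic calculation along the lines the author indicates. First I substitute $r = \lambda(w - u)$ from (4.14) into (4.6) to eliminate $r$, producing a relation between $z$ and $u$ that is quadratic in $z$ (with $w$, $\lambda$, $S_x$, $S_z$ fixed). Completing the square in $z$ and using $S^2 = S_x^2 + S_z^2$ to absorb $S_z$ (via $S_z^2 - S^2 = -S_x^2$), I obtain an identity of the form $(z - S_z)^2 = P(u)$ in which $P(u)$ is an explicit quadratic polynomial in $u$ whose coefficients depend only on $\lambda$, $w$, and $S_x$.

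Since this identity determines $z$ as $S_z \pm \sqrt{P(u)}$, the condition that $z$ be affine in $u$ is equivalent to the condition that $P(u)$ be the square of an affine function of $u$. I would then match $P(u)$ against $(\alpha u + \beta)^2$: the quadratic coefficient forces $\alpha = \pm\lambda$, the linear coefficient determines $\beta$ in terms of $\alpha$ and $\lambda$, and the constant coefficient supplies a single scalar constraint on $\lambda$. After substituting the expression for $\beta$, the constraint reduces, via factoring out $1 - \lambda^2$, to $\lambda^2(1 - S_x^2/w^2) = 1$, which is the stated condition $\lambda = (1 - S_x^2/w^2)^{-1/2}$ (taking the positive root, as $\lambda > 0$ since $r, w - u > 0$).

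Conversely, when $\lambda$ has this value, the three coefficient-matching equations are simultaneously satisfied and yield $\beta = -\lambda S_x^2/w$, giving $z - S_z = \pm\lambda(S_x^2/w - u)$; the two independent sign ambiguities (from $\alpha = \pm\lambda$ and from the square root) collapse into the single $\pm$ of the stated formula.

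The only real obstacle is bookkeeping. The step that requires mild care is the elimination of $\alpha$ and $\beta$ from the matching-coefficient system: the constraint on $\lambda$ initially appears as a relation of apparent degree four, but factors cleanly via a common factor of $1 - \lambda^2$ to yield the simple quadratic condition above. Sign tracking also needs to be done cleanly so that the final statement takes the compact form advertised.
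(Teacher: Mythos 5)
Your proposal is correct and follows exactly the route the paper indicates (substitute (4.14) into (4.6), complete the square in $z$, and force the resulting quadratic $P(u)=\lambda^2u^2-2w(\lambda^2-1)u+(\lambda^2-1)w^2-S_x^2$ to be the square of an affine function); the coefficient-matching does collapse as you claim to $\lambda^2(w^2-S_x^2)=w^2$ and yields $\beta=\mp\lambda S_x^2/w$, reproducing (4.16). The paper supplies no more detail than ``by completing the square,'' so your write-up is simply a fleshed-out version of the same argument.
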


 Equation (4.9) is an immediate consequence of (4.15) and (4.16).

 As is well known, initial values $\mathbf r_0$ and $\mathbf p_0$ for the position and momentum of the planet relative to the sun determine a unique solution of the equations of motion. Thus, assuming the parameters $w$, $S_x$, and the constant of integration $c$ in (4.12) can be chosen in a way that is consistent with the initial conditions and the requirement on $\lambda$ from Theorem 4.5, the solution must be exactly that one given by (4.13),(4.14), and (4.16).

 Four of the degrees of freedom in the vector values of $\mathbf r_0$ and $\mathbf p_0$ have already been used to fix the line L and plane $\Pi^{\prime}.$ As we show next, the remaining two degrees of freedom, determined by the magnitudes $p_0$ and $r_0$, together with the algebraic requirement on $\lambda$ in Theorem 4.5, do indeed determine one and only one choice of the parameters.

 There are two cases depending on whether $l^2 > m^2kr_0$ or $l^2 < m^2kr_0$, the case of equality being ruled out by the assumption that the orbit is not a circle.  It turns out that the planet is initially at perihelion in the former case, and initially at aphelion in the latter case. We consider only the first case in detail, since the two are very similar.

 The constraints on $w, S_x,$ and $c$ imposed by the initial conditions are summarized in the following three equations:

 \begin{equation}
 r_0 = k^{\frac13}\left(\frac{w}{c}\right)^{\frac23}\left(1 - \frac{S_x}{w}\right),
 \end{equation}
 \begin{equation}
 l = mk^{\frac13}(w^2c)^{\frac13},
 \end{equation}
 and
 \begin{equation}
 \left(\frac{k}{c^2w}\right)^{\frac13} = \left(1 - \frac{S_x^2}{w^2}\right)^{-\frac12}.
 \end{equation}

 Equation (4.17) is a consequence of (4.14). To obtain (4.18), use $l = r_0p_0 = mr_0w\dot\phi(0),$ (4.12), and (4.17).

 To solve these equations, it is convenient to let $e = \frac{S_x}{w}$, since this will turn out to be the eccentricity of the orbit. First multiply equations (4.18) and (4.19), obtaining
 \begin{equation}
 \left(\frac{w}c\right)^{\frac13} = \frac{l}{mk^{\frac23}\sqrt{1 - e^2}}.
 \end{equation}
 After substituting this result in (4.17), we find
 \begin{equation}
 1 + e = \frac{l^2}{m^2kr_0},
 \end{equation}
 which determines the value of the quantity $e$ from the initial conditions. (We have that $e > 0$, hence $S_x > 0,$ since the right hand side of (4.21) is greater than one - see the paragraph before (4.17). It then follows from (3.2) that $e$ satisfies $ 0 < e < 1.$) Equation (4.19) can be solved for $c$ using (4.20), and finally $w$ can be found from (4.20), (4.22), and (4.17.). The results are:
 \begin{equation}
 c = \frac{mk(1 - e^2)}{l},
 \end{equation}

 \begin{equation}
 w = \frac{(1 - e^2)^{\frac12}r_0}{1 - e},
 \end{equation}
 and
 \begin{equation}
 S_x = we = er_0\sqrt{\frac{1+e}{1-e}}.
 \end{equation}
 (In case 2, one takes $e = -\frac{S_x}{w}$ and proceeds similarly.)

 It follows from similar triangles that $\frac{w}{S_x} = \frac{r_0 + S}{S}$, from which we conclude that
 \begin{equation}
 S = r_0\frac{e}{1-e}.
 \end{equation}

 The location of the origin is now completely determined. It is on the line L at distance $S$ from the sun in the direction opposite the planet. To complete the description of the coordinate system, choose the positive z direction so that the coordinate system is right-handed. Then the z-component of $\mathbf l$ is positive, which implies that the angle, $\psi$, from the sunward side of L to the positive z-axis satisfies $  0 < \psi < \pi.$ Since one has $\sin\psi = \frac{w}{r_0 + S},$ it is easy to derive from (4.23) and (4.25) the following rather pleasing result:
 \begin{equation}
 e = |\cos(\psi)|.
 \end{equation}
The two solutions for $\psi$ correspond to two orbit cylinders. If $\psi$ is acute then both $S_z$ and the z-component of $\mathbf r_0$ are positive. If $\psi$ is obtuse, then both are negative. In either case, another application of similar triangles yields
\begin{equation}
 |S_z| = r_0\frac{e^2}{1 - e}.
 \end{equation}

 Letting $a = \frac{r_0}{1 - e},$ we may write (4.14), the {\it orbit equation}, as
 \begin{equation}
 r = a(1 - e\cos\phi),
 \end{equation}
 where $\phi$ is determined by solving Kepler's equation (4.13). This form reveals that the angular cylindrical coordinate $\phi$ may be interpreted as the {\it eccentric anomaly}, and that the radial cylindrical coordinate, $w$, is equal to $b$, the semi-minor axis of the orbit.

 If $0 < \psi < \frac{\pi}2$ then $S_z > 0.$ If we take the minus sign solution from (4.16) we obtain
 \begin{equation}
 z = ae\cos\phi.
 \end{equation}
 (We can eliminate the positive sign solution because, by (4.25) and similar triangles, we have $z(0) = ae.$)
 If $\frac{\pi}2 < \psi < \pi$ then $S_z < 0.$ In this case, taking the positive sign solution from (4.16) yields $z = -ae\cos\phi.$

It is worth noting the character of the motion in various 2 dimensional projections other than the orbital plane. The pair $(x,y)$ moves around the circular cross section of the orbit cylinder: $(x,y) = (w\cos\phi, w\sin\phi).$ The motion of the pair $(x,z)$ is degenerate, being confined to the line L through the origin of slope $\frac{e}{\sqrt{1-e^2}},$ while the pair (z,y) moves on an ellipse centered at the origin with axes of lengths $ea$ and $\sqrt{1-e^2}a.$

We conclude this section with a derivation of Kepler's 3rd law. Let $\tau$ be the period of the orbit. Then, by (4.13) we have $\tau = 2\pi\frac{w}c.$ Thus, by (4.20) and (4.21),
\begin{equation*}
\tau = \frac{2\pi l a}{mk\sqrt{1-e^2}}.
\end{equation*}
To eliminate $l$, use Kepler's 2nd law (an easy consequence of conservation of momentum) to find $l\tau = 2\pi abm,$ thus obtaining Kepler's 3rd law:
\begin{equation*}
\tau^2 = \frac{4\pi^2}{k}a^3.
\end{equation*}

\section{The Orbit of the Earth}

The orbit cylinder axes, or equivalently, the asymptotes of the set of orbit cone vertices, may be regarded as elements of an orbit. It is therefore of interest to locate these axes for significant orbits such as the orbit of the earth.

The orbit of the earth is not a fixed ellipse, so it is necessary to begin by clarifying what we mean by ``orbit of the earth.'' We shall regard the ``earth'' as an idealized point representing the center of mass of the earth-moon system. Then, following the usual procedure for reducing a two body problem to a one body problem, we regard this point as being in orbit around the center of mass of the solar system, rather than the center of mass of the sun. The resulting orbit is still not an ellipse due to the influence of the other planets. Nevertheless, during a reasonably short time interval -- a year, say -- the orbit is an ellipse to very high precision. Accordingly, we shall define the elliptical orbit of the earth as the best approximating ellipse during a selected year. It is natural to select for this role the year 2000, since modern tables of stellar and planetary positions are referenced to the ecliptic and vernal equinox as they existed at 12 hours coordinated universal time on Jan 1, 2000 (Julian date 2451545.0, or J2000 for short.) The Jet Propulsion Laboratory web site \cite{JPL} provides the following values for selected elements of this orbit: semi-major axis $a = 1.00000261$AU,  eccentricity $e = 0.01671123,$ inclination of orbital plane to reference plane $ I = -0^{\circ}.00001531,$ and longitude of the perihelion $\pi = 102^{\circ}.93768193.$ (Recall that the longitude is measured eastward along the ecliptic from the J2000 vernal equinox.)

The nonzero value of the inclination element deserves a comment. Tabulated celestial coordinates reference the {\it mean} ecliptic and equator at the epoch J2000, not the actual ecliptic and equator. The mean in question is an average over periodic perturbations ( e.g., nutation of the earth's axis of rotation,) which are more significant for the equator and equinox than for the ecliptic. Since we are, in any case, considering an idealized orbit, we shall simply assume the orbit lies in the mean ecliptic and take $I = 0.0.$

The orbit cylinder axes intersect the celestial sphere at 4 points which may be regarded as lying infinitely far from the observer. Thus, the geometric and apparent positions of these points coincide. (Precise calculation of apparent positions of real celestial bodies -- planets, moons, and even stars -- must address the location of the observer, effects due to motion of the observer, and the finite speed of light.) We shall identify the axes by their points of intersection with the celestial sphere.

Since the cylinder axes lie in a plane perpendicular to the ecliptic through the perihelion, the circle of celestial longitude
($\lambda$) of all 4 points is the same as for the perihelion. The celestial latitude, $\beta$, is the same as the angle we denoted $\psi$ in (4.26) above, and is determined by the eccentricity alone. The celestial coordinates of the 4 points are then as follows, where we list the pairs $(\beta,\lambda),$ and in brackets the corresponding declination and right ascension: $(89^{\circ}.0425, 102^{\circ}.9377)[+67^{\circ}.4930, 17\text{h}57\text{m}45.6\text{s}]$,\newline
$(89^{\circ}.0475, 282^{\circ}.9377)[+65^{\circ}.6270, 18\text{h}2\text{m}4.7\text{s}]$,$(-89^{\circ}.0425, 282^{\circ}.9377)[-67^{\circ}.4930, 5\text{h}57\text{m}45.6\text{s}]$,\newline and $(-89^{\circ}.0425, 102^{\circ}.9377)[-65^{\circ}.6270, 6\text{h}2\text{m}4.7\text{s}]$.


\begin{thebibliography}{99}
\bibitem{eves}H. Eves,{\it A Survey of Geometry, Volume I}, Allyn and Bacon, Boston, 1963.
\bibitem{JPL}http://ssd.jpl.nasa.gov.
\bibitem{goldsteintext}H. Goldstein, {\it Classical Mechanics}, 2nd Ed., Addison Wesley, Reading, 1980.
\bibitem{goldsteinpaper}H. Goldstein, The Prehistory of the Runge-Lenz vector, {American Journal of Physics}, {\bf 43}(1975), 735-738.
\bibitem{siegel}C.L. Siegel, J.K, Moser, {\it Lectures on Celestial Mechanics}, Springer, New York, 1971.
\end{thebibliography}
\end{document}